\documentclass[sigconf,nonacm]{aamas}

\fancypagestyle{arxiv}{%
  \fancyhf{}%
  \fancyfoot[C]{\thepage}}
\usepackage{balance}

\usepackage{newfloat}
\usepackage{listings}
\DeclareCaptionStyle{ruled}{labelfont=normalfont,labelsep=colon,strut=off}
\floatstyle{ruled}
\newfloat{listing}{tb}{lst}{}
\floatname{listing}{Listing}

\newcommand{\cmark}{\ding{51}}
\newcommand{\xmark}{\ding{55}}

\usepackage{multirow}
\usepackage{enumitem}
\usepackage{makecell}
\usepackage{xcolor}
\usepackage{colortbl}
\definecolor{ours}{RGB}{235,235,235}
\usepackage{etoolbox}
\preto\cmidrule{\noalign{\global\rownum=0\relax}}

\newtheorem{proposition}{Proposition}

\usepackage{mdframed}

\newcount\Comments
\newcommand{\kibitz}[2]{\ifnum\Comments=1\textcolor{#1}{#2}\fi}

\setcopyright{none}
\renewcommand\footnotetextcopyrightpermission[1]{}

\title[AAMAS-2027 Formatting Instructions]{AI Alignment for Digital Democracy: Fairer Voters Representation against LLM Biases}

\author{Yi Qi}
\affiliation{
  \institution{University of Leeds}
  \city{Leeds}
  \country{United Kingdom}}
\email{Y.Qi@leeds.ac.uk}

\author{Srijoni Majumdar}
\affiliation{
  \institution{University of Leeds}
  \city{Leeds}
  \country{United Kingdom}}
\email{S.Majumdar@leeds.ac.uk}

\author{Evangelos Pournaras}
\affiliation{
  \institution{University of Leeds}
  \city{Leeds}
  \country{United Kingdom}}
\email{E.Pournaras@leeds.ac.uk}

\begin{abstract}
AI representation is emerging in digital democracy, assisting in voter emulation and decision-making. Large language models (LLMs) behave differently depending on the model, the prompt context, and the voter information they receive, representing some groups more accurately than others and potentially making the emulation unfair to certain groups. We investigate demographic biases in LLM-emulated electorates: how accurately LLMs represent individual votes, whether some voter groups are represented more accurately than others, and whether AI alignment can reduce these disparities without sacrificing overall accuracy. Using three U.S. presidential elections, we compare a closed-source and an open-source LLM across voters' demographic features and test AI alignment strategies toward fairer voter representation. 
Removing gender from the voter profile reduces the largest group error of the the models by 22–42\%, and removing education reduces it by 64\%, whereas removing self-reported race increases the mean group error by over 40\%, all without lowering accuracy.
Common alignment strategies, such as repeated queries or fairness instructions, do not reduce these errors and can double them.
\end{abstract}

\keywords{Large Language Model, AI representation, Generative AI, Digital democracy voting, Election}

\newcommand{\BibTeX}{\rm B\kern-.05em{\sc i\kern-.025em b}\kern-.08em\TeX}

\begin{document}

\maketitle
\thispagestyle{arxiv}
\pagestyle{arxiv}
\section{Introduction}
\begin{figure}[htbp]
    \centering
    \includegraphics[width=0.9\linewidth]{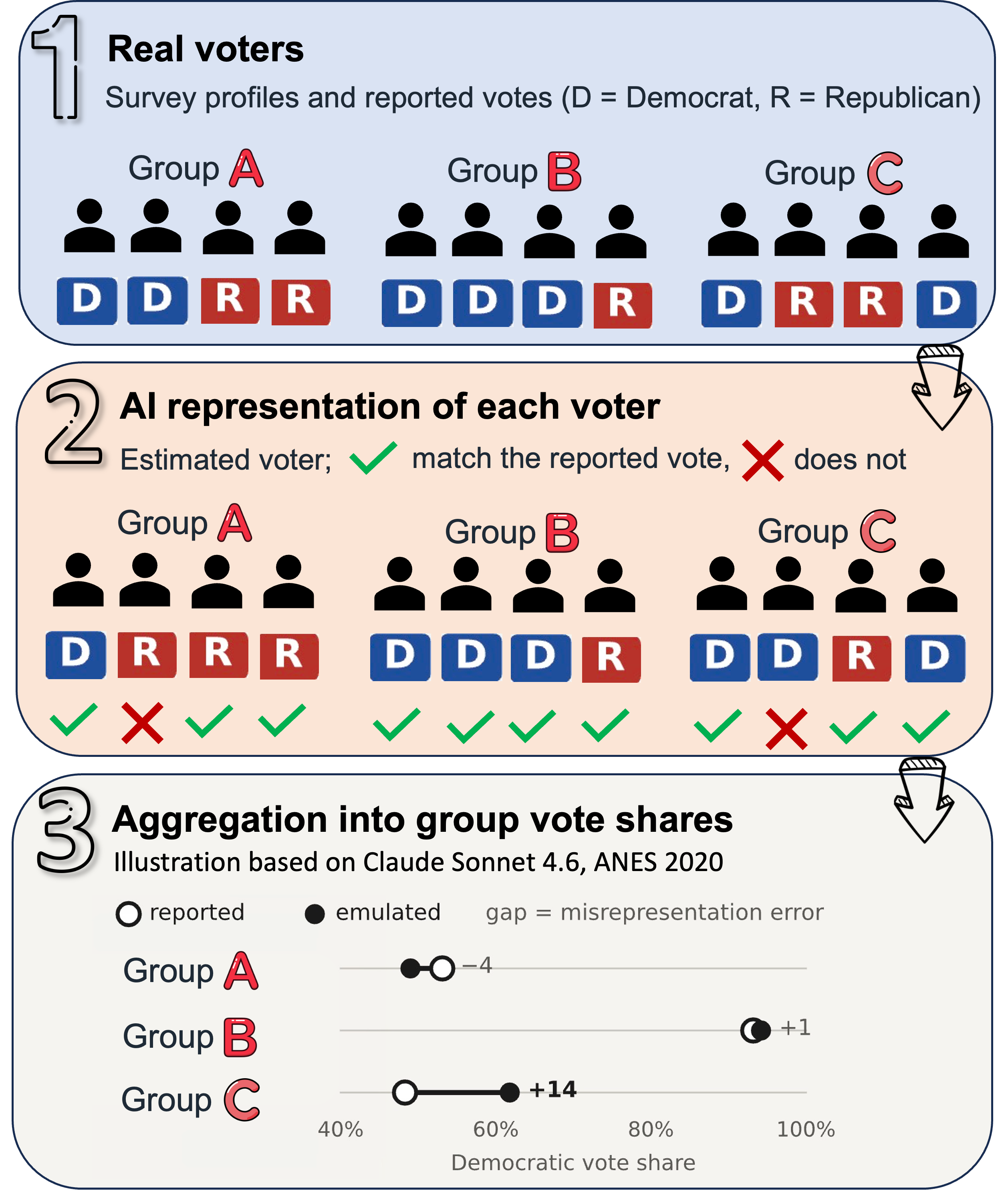}
    \caption{Voter misrepresentation in AI representation. High individual accuracy (85\%) and a small overall error (3 points) can coexist with a large error for one group (Group C, 14 points), reversing its ordering relative to another group.}
    \label{fig:overview}
\end{figure}

Large language models (LLMs) are increasingly used to emulate human
respondents by grounding predictions on demographic, ideological, and
contextual profiles \cite{argyle2023out,zhang2024electionsim,
li2025political,majumdar2026generative,ziems2024can}. In electoral settings, an LLM can
predict a voting choice for each real voter profile and aggregate these
predictions into a synthetic electorate. We refer to this process as
\emph{AI representation of voters}. 
It is used to pilot and complement pre-election polling and forecasting \cite{cerina2025democratic,markaki2026predictive}, to estimate the preferences of abstaining or hard-to-reach voters who are missing from surveys \cite{majumdar2026generative,gholamnia2026enhancing}, to build synthetic populations for policy and social-science research \cite{argyle2023out,zhang2024electionsim}, and to support deliberation and collective decision-making in digital democracy \cite{majumdar2026generative,rymon2026people}. In each of these uses, AI representation is valuable only if the resulting electorate reproduces aggregate vote shares and individual voters and demographic groups are represented faithfully and as intended.
We distinguish \emph{individual fidelity}, where an emulated vote matches
the corresponding human respondent's reported vote, from \emph{aggregate
accuracy}, where the simulated electorate reproduces empirical vote-share
distributions. These attributes do not align: aggregate agreement can
conceal errors that are systematically concentrated within particular voter
groups. 
Prior work on LLM-based opinion simulation shows that model responses to
survey and political questions shift with the demographic profile the model
is conditioned on, and often align better with some groups' opinions than
others \cite{argyle2023out,sun2024random,lee2024can,santurkar2023whose,bang2024measuring}.
Studies that emulate voters specifically report a second problem: the
predicted vote totals can miss the real election result, and the predicted
differences between groups can be larger than those observed in real
voters \cite{kreutner2026persona,potter2024hidden,von2024united,yu2024large}.
What remains open is whether the error in predicting individual votes is
itself unevenly distributed across voter groups, and whether this unevenness
distorts the group vote shares of the emulated electorate.

\noindent\textbf{RQ1.}~\emph{
Is the vote-share error of LLM-emulated electorates equal across demographic groups?
}
We refer to such unequal predictive performance as \emph{representation disparity}: a model may attain reasonable overall accuracy while reconstructing the preferences of some demographic groups less faithfully than others. When errors within a group are directional rather than random, for example when the model predicts too many Democratic votes for one group, they shift that group's emulated vote share away from its reported share, even when overall accuracy appears acceptable (Figure~\ref{fig:overview}). We call such a group \emph{misrepresented}. This matters because many uses of emulated electorates rely on group-level estimates, such as a group's vote share in a poll, and because group errors that do not cancel out shift the overall vote share and can change the emulated winner in close elections. A natural response is repeated sampling, combining multiple generations by majority voting or vote-share averaging \cite{yu2024large,majumdar2026generative}. This reduces random variation but cannot correct errors the model makes consistently: a mistake repeated in every run survives the majority vote. Correcting such errors requires changing how the model represents each voter, which motivates our second question.

\noindent\textbf{RQ2.}~\emph{
Does steering individual emulated voters before aggregation reduce group-level misrepresentation without lowering individual accuracy?
}
To answer RQ2, we compare three levels of control: the base model, a
prompt-based instruction, and a decomposed prompt that separates
vote-relevant reasoning from group representation; for the open-source
model, we also add activation-level control. We evaluate them on every
eligible respondent of ANES 2012, 2016, and 2020. 

Our contributions map onto the two research questions:
\begin{itemize}
\item \emph{Measuring misrepresentation (RQ1).} We define misrepresentation as the gap between a group's emulated and reported vote shares and show, across three elections and two models, that it reaches 10 points or more while individual accuracy stays near 85\%.
\item \emph{Which voter information is fair to use (RQ1).} 
We show that removing gender from the voter profile reduces the largest gender-group error of the voter choices emulated by Claude by 42\% and of the voter choices emulated by Llama by 22\%, and removing education reduces the largest education-group error of the voter choices emulated by Claude by 64\% (p < 0.01), whereas removing self-reported race increases the mean group error by over 40\% (p < 0.05), all without lowering accuracy.
\item \emph{Testing alignment controls (RQ2).} We show that repeated queries, aggregation, fairness instructions, and activation-level control do not reduce misrepresentation, and that a fairness instruction can double the error for a group.
\end{itemize}

\section{Background}
In this section, we reviews how large language models have been used to emulate survey respondents and voters, and methods for measuring and reducing unfairness in LLM-based decision making. We combine the two by measuring fairness within voter emulation and testing whether such methods reduce it.

\subsection{Generative AI Representation of Voters}
Early work on LLM-based human emulation established the possibility of constructing synthetic samples of real survey respondents by prompting language models on real respondent profiles \cite{qi2025representation,aher2023using,horton2023large,dillion2023can}. Argyle et al. introduce “silicon samples” by prompting GPT-3 on socio-demographic backstories drawn from human survey respondents and use the term algorithmic fidelity to show the extent to which model-generated responses reproduce patterns observed in human samples \cite{argyle2023out}. Their results establish demographic prompting as a basis for LLM-based social simulation, but focus primarily on correspondence between synthetic and human response distributions rather than disparities in predictive error across voter groups \cite{linegar2023large}.
Subsequent work has identified limitations of this approach \cite{shrestha2024beyond,zhao2025large}. 
Bisbee et al. condition ChatGPT on profiles derived from ANES 2016 and 2020 respondents and find that synthetic responses can reproduce average public-opinion patterns while exhibiting substantially reduced variation and altered relationships between respondent characteristics and outcomes. 
They further show subgroup differences by race and partisanship, illustrating that aggregate agreement can conceal distortions within the synthetic sample \cite{bisbee2024synthetic}. We build on this observation by measuring such distortions in vote choice, comparing each group's emulated vote share with its reported vote share, and testing whether alignment interventions reduce them.
More recent work focuses directly on electoral behaviour \cite{ma2025algorithmic,chen2024simulating,ashokkumar2026large}. 

Von der Heyde et al. use individual-level voter information to predict voting behaviour in the 2024 European Parliament elections and find that prediction accuracy varies substantially across national and linguistic contexts \cite{von2024united}. Yu et al. develop a large-scale voter-simulation framework using demographic, ideological, and temporal information and evaluate multiple LLMs across U.S. election settings. Their analysis shows that simulated voting patterns can overstate demographic differences involving gender, ethnicity, age, and education, even when aggregate electoral accuracy improves \cite{yu2024large}.
Majumdar et al.\ show that fair collective choice can be resilient to LLM biases and inconsistencies: under several aggregation methods and across proprietary and open-source models, AI-generated choices agree with human choices more consistently at the collective level than at the individual
level \cite{majumdar2026generative}.
However, that work focuses on the robustness of collective-choice mechanisms rather than on whether demographic groups experience systematically different representation errors before aggregation \cite{park2024llm,lin2024large}. Our findings complement this result, which aggregation absorbs random individual inconsistencies, but errors that are
systematic within a demographic group persist through aggregation and
distort that group's emulated vote share.

\subsection{Fairness Analysis in LLM Decision Making}

A parallel literature examines fairness in LLM-supported decision making \cite{hall2026guiding}, and fairness in LLM-based simulation \cite{abeliuk2026auditing} and decision support. Hall et al.\ propose Fairness Reward Models, which improve the fairness of LLM decisions in high-stakes tasks such as recidivism prediction and hiring \cite{hall2026guiding}. The LLM generates several chains of reasoning for the same case, a separate model rates each chain for bias, and the final decision gives more weight to the less biased chains.
Their evaluation explicitly measures disparities using equalized opportunity and equalized odds gaps \cite{hardt2016equality} across protected attributes including race and gender, showing that fairness-oriented inference-time selection can reduce group disparities while maintaining predictive accuracy.
Related work has studied both the measurement of prompt-induced political bias in LLMs \cite{faulborn2025only,rettenberger2025assessing} and 
prompt-level and representation-level interventions for controlling multiple model objectives \cite{c6c06c9474dc4ca2a8280d7ee02129d6,liu2024evaluating,lu2024investigating}. 
These methods define fairness as similar treatment across groups, which suits decisions such as hiring but not voter emulation, where the reported vote shares differ between groups. The fair model must reproduce each group's actual vote share, and an intervention that equalises predictions across groups can increase misrepresentation rather than reduce it \cite{li2025fairsteer,tan2025unmasking,haller2024opiniongpt}. In contrast, our setting combines these two strands, we evaluate fairness within voter emulation by comparing representation error across demographic groups, and then test whether interventions applied before aggregation and at representative process reduce these disparities in the resulting emulated electorate.

\begin{table}[h]
\centering
\caption{Comparison of closely related work across six dimensions of fairness evaluation in AI voter representation.}
\label{tab:related_work_comparison}
\scriptsize
\setlength{\tabcolsep}{2pt}
\begin{tabular*}{\columnwidth}{@{\extracolsep{\fill}}lcccccc@{}}
\toprule
& \shortstack{Per-group\\comparison}
& \shortstack{Fairness\\metric}
& \shortstack{Race/\\ethnicity}
& \shortstack{Multiple\\elections}
& \shortstack{Open \&\\closed LLMs}
& \shortstack{Voter\\emulation} \\
\midrule
Argyle~\cite{argyle2023out}          & $\triangle$ & \xmark & \cmark & \xmark & \xmark & \cmark \\
Bisbee~\cite{bisbee2024synthetic}    & $\triangle$ & \xmark & \cmark & \cmark & \xmark & \cmark \\
von der Heyde~\cite{von2024united}   & \cmark & \xmark & $\triangle$ & \xmark & \xmark & \cmark \\
Yu~\cite{yu2024large}                & $\triangle$ & $\triangle$ & \cmark & \cmark & \cmark & \cmark \\
Majumdar~\cite{majumdar2026generative} & $\triangle$ & $\triangle$ & $\triangle$ & \cmark & \cmark & \cmark \\
Hall~\cite{hall2026guiding}          & \cmark & \cmark & \cmark & -- & $\triangle$ & \xmark \\
\midrule
\textbf{Ours}                        & \cmark & \cmark & \cmark & \cmark & \cmark & \cmark \\
\bottomrule
\end{tabular*}
\begin{minipage}{\columnwidth}
\footnotesize
\textit{Notes:}
\cmark{} = explicit coverage,
$\triangle$ = partial or indirect coverage,
\xmark{} = not addressed,
-- = not applicable.
\end{minipage}
\end{table}
To position our work relative to these two strands of literature, Table~\ref{tab:related_work_comparison} compares closely related studies along six dimensions relevant to our setting whether they: (i) compare predictive performance across groups, (ii) employ a formal fairness metric, (iii) evaluate race or ethnicity, (iv) across multiple election cycles, (v) compare closed-source model and open-source models, and (vi) study voter emulation directly. Prior work addresses several of these dimensions individually, but none of the studies considered here combines all six within a single evaluation framework.

In this table, A partial
mark ($\triangle$) indicates that a study covers the dimension indirectly, for example by reporting subgroup differences without a formal fairness metric. A cross (\xmark) means the dimension applies to the study but is not addressed, whereas a dash (--) means it does not apply, such as Hall et al.~\cite{hall2026guiding} study fairness in general decision tasks rather than elections, so election cycles are not part of their setting. Prior work covers several of these dimensions individually, but none combines
all six, which is needed to test whether demographic misrepresentation in voter emulation is consistent across models and elections and whether fairness interventions reduce it.

\section{Problem Formulation}
\label{sec:formulation}
We emulate an electorate in four steps. (i) Each survey respondent is
described by a voter profile and their reported vote. (ii) Respondents are grouped by a demographic attribute (e.g. race). (iii) The LLM predicts one vote for each profile. Because predictions vary between runs, the model can be queried $K$ times per voter, and all $K$ predictions are then combined into a single emulated vote by majority, which keeps one vote per voter because each voter casts one ballot. (iv) The emulated votes of all voters in a group are aggregated into a group vote results and compared with the group's reported vote results.

\subsection{Voter Profiles and Voter Emulation}
\label{sec:individual_emulation}

Let $\mathcal{D} = \{(x_j, a_j, y_j)\}_{j=1}^{N}$ denote a set of $N$ voter profiles, where $x_j$ contains individual-level profile information such as ideology, age, education, and political interest, $a_j \in \mathcal{R}$ is the respondent's group attribute (here, self-reported race or ethnicity or other grouping criteria can be used), and $y_j \in \mathcal{Y}$ is the respondent's reported vote. Each vote takes one of two values, $\mathcal{Y} = \{\text{Democrat}, \text{Republican}\}$; respondents who reported a third-party vote or did not vote are excluded. 
Given a voter profile $(x_j, a_j)$, the LLM generates a voting \emph{choice} $\hat{y}_j \in \mathcal{Y}$ for that voter; we call this process \emph{AI choice generation}. The information included in the profile, such as race, gender, or education, is the input that shapes each generated choice.
\emph{Individual fidelity} refers to agreement between this emulated vote and the vote reported by the same respondent; the prediction for respondent $j$ is correct when $\hat{y}_j = y_j$. We measure individual fidelity by individual accuracy $\mathrm{Acc}_{\mathrm{ind}}$, the share of respondents
whose emulated vote matches their reported vote:
\begin{equation}
\mathrm{Acc}_{\mathrm{ind}}
=
\frac{1}{N}
\sum_{j=1}^{N}
\mathbf{1}[\hat{y}_j = y_j],
\label{eq:individual_accuracy}
\end{equation}
where $\mathbf{1}[\cdot]$ is the indicator function.

\subsection{Demographic Grouping}
\label{sec:demographic_grouping}


For group-level evaluation, respondents are partitioned by the attribute
$a_j$. Let $\mathcal{R}$ denote the set of groups and
$N_r = |\{j : a_j = r\}|$ the number of respondents in group
$r \in \mathcal{R}$. This partition lets us test whether emulation errors
are distributed unevenly across groups, because each prediction is checked against the respondent's reported vote $y_j$, and each group's emulated vote share is compared with its reported vote share.

We group respondents by the race or ethnicity they self-identify with for three reasons: survey data show that vote choice differs between voters who self-identify with different racial and ethnic groups in U.S. elections \cite{nicolas2024taxonomy}, LLMs show well-documented racial stereotyping, and fair representation of voters who self-identify with a racial or ethnic group is a central concern in electoral law and democratic legitimacy.
The framework itself applies to any grouping, for example, other demographic attributes such as gender, age, education, religion, or region can reveal stereotyping and under-representation of minority opinions and combinations of attributes can reveal intersectional bias.

\subsection{LLM Repeated Vote Generation}
\label{sec:repeated_generation}



LLMs generate text by sampling each word from a probability distribution rather than always choosing the most likely one. This \emph{stochastic decoding}, controlled by parameters such as temperature \cite{holtzman2019curious}, means that the same voter profile can receive different predicted votes in different runs. We therefore query the model $K$ times per profile, obtaining predictions $\hat{y}_{j,1}, \ldots, \hat{y}_{j,K}$, and summarize them by the share of runs that predict a Democratic vote:
\begin{equation}
\hat{p}_j = \frac{1}{K} \sum_{k=1}^{K} \mathbf{1}[\hat{y}_{j,k} = \text{Democrat}].
\label{eq:vote_distribution}
\end{equation}
The emulated vote is the majority prediction, $\hat{y}_j = \text{Democrat}$
if $\hat{p}_j > 0.5$ and Republican otherwise, following self-consistency decoding \cite{wang2022self}; we use odd $K$ to avoid ties. We measure how consistent the model is for respondent $j$ by $c_j = \max(\hat{p}_j, 1 - \hat{p}_j)$, which equals 1 when all runs agree and 0.5 when they are evenly split. Combining runs is done within each respondent and is distinct from aggregation, which combines different respondents into group voting.

\subsection{Group Misrepresentation Metrics}
\label{sec:representation_metrics}

After obtaining one emulated vote $\hat{y}_j$ per respondent, we aggregate votes within each group $r \in \mathcal{R}$. Because votes are binary, a group's vote is fully described by its Share of emulated Democratic choices minus reported share; the Republican share is its complement and has the same error. We compare the share of emulated Democratic choices $\hat{\theta}_r$ with the reported share of Democratic votes $\theta_r$ of the same respondents:
\begin{equation}
\hat{\theta}_r = \frac{1}{N_r} \sum_{j:a_j=r} \mathbf{1}[\hat{y}_j = \text{Dem}],
\qquad
\theta_r = \frac{1}{N_r} \sum_{j:a_j=r} \mathbf{1}[y_j = \text{Dem}].
\label{eq:vote_shares}
\end{equation}
The \emph{misrepresentation error} of group $r$ is the gap between its emulated and reported shares, and our primary fairness metric is the largest such gap across groups:
\begin{equation}
\varepsilon_r = |\hat{\theta}_r - \theta_r|,
\qquad
\varepsilon_{\max} = \max_{r \in \mathcal{R}} \varepsilon_r.
\label{eq:misrep_error}
\end{equation}
We use the absolute rather than relative difference because a gap of a few points has the same electoral weight regardless of a group's baseline share. We also report the mean group error $\bar{\varepsilon}$ and compare both with the population-level error $\varepsilon_{\text{pop}} = |\hat{\theta} - \theta|$ computed over all respondents.

Because all errors are measured in vote-share points against the same respondents' reported votes, they are comparable across groups within a dataset. Small groups yield noisy estimates, so we report $N_r$ and a 95\% confidence interval for every group's signed error $\hat{\theta}_r - \theta_r$, and call a group \emph{misrepresented} when this interval excludes zero. Unlike individual accuracy (Eq.~\eqref{eq:individual_accuracy}), these metrics capture whether errors cancel out or accumulate within a group.

\subsection{From Individual to Group Errors}
\label{sec:transfer}
Improving individual predictions need not improve group vote shares,
because the emulated vote is a majority over runs. A change in a voter's
vote distribution $\hat{p}_j$ has no effect unless it crosses $0.5$, and
then changes the group share by a full $1/N_r$. Aggregating
probabilities instead, $\tilde{\theta}_r = \frac{1}{N_r}\sum_{j:a_j=r}\hat{p}_j$,
removes this discontinuity.

\begin{lemma}
\label{lem:transfer}
Let $\tilde{\theta}_r^{(1)},\dots,\tilde{\theta}_r^{(M)}$ be group shares
obtained by probability aggregation from $M$ predictors, each with
$|\tilde{\theta}_r^{(m)} - \theta_r| \leq \varepsilon$. Then any convex
combination $\sum_m w_m \tilde{\theta}_r^{(m)}$ also has error at most
$\varepsilon$.
\end{lemma}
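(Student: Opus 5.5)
The plan is to use the triangle inequality together with the defining properties of a convex combination: weights $w_m \geq 0$ with $\sum_{m=1}^{M} w_m = 1$. The key observation is that the reported share $\theta_r$ does not depend on the predictor. It can therefore be written as $\theta_r = \sum_m w_m \theta_r$, and the error of the combined share can be pulled apart predictor by predictor.

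First I will rewrite the combined error as a weighted sum of individual errors:
\[
\Bigl|\sum_m w_m \tilde{\theta}_r^{(m)} - \theta_r\Bigr|
= \Bigl|\sum_m w_m \bigl(\tilde{\theta}_r^{(m)} - \theta_r\bigr)\Bigr|.
\]
Second, I will apply the triangle inequality. Because each $w_m \geq 0$, we have $|w_m z| = w_m |z|$, so the right-hand side is at most $\sum_m w_m |\tilde{\theta}_r^{(m)} - \theta_r|$. Third, I will insert the hypothesis $|\tilde{\theta}_r^{(m)} - \theta_r| \leq \varepsilon$ for every $m$. This bounds the sum by $\varepsilon \sum_m w_m = \varepsilon$.

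Equivalently, the argument can be phrased geometrically. The hypothesis puts every $\tilde{\theta}_r^{(m)}$ in the interval $[\theta_r - \varepsilon, \theta_r + \varepsilon]$. Intervals are convex, so every convex combination of these points stays in the same interval.

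There is no real obstacle here. The only step needing care is the first one, and in particular the normalisation $\sum_m w_m = 1$: without it, $\theta_r$ could not be absorbed into the sum. Non-negativity of the weights is also essential, since the triangle-inequality step would fail for signed weights. I will also remark that the statement relies on probability aggregation. Each $\tilde{\theta}_r^{(m)}$ is a linear average of the $\hat{p}_j$, so mixing the predictors' vote distributions voter by voter produces exactly the convex combination of their group shares. Under majority aggregation this linearity breaks because of the $0.5$ threshold, which is the discontinuity noted in Section~\ref{sec:transfer}.
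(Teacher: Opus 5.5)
Your proof is correct and follows the paper's argument: you absorb $\theta_r$ into the sum using $\sum_m w_m = 1$, then apply the triangle inequality with nonnegative weights. This matches the paper's one-line proof and its appendix restatement, which also records your intermediate bound $\sum_m w_m |\tilde{\theta}_r^{(m)} - \theta_r|$; your remark on linearity under probability aggregation versus the $0.5$ threshold of majority voting likewise mirrors the paper's own discussion.
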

\begin{proof}
$|\sum_m w_m \tilde{\theta}_r^{(m)} - \theta_r|
= |\sum_m w_m (\tilde{\theta}_r^{(m)} - \theta_r)|
\leq \sum_m w_m \varepsilon = \varepsilon$.
\end{proof}

No such guarantee holds for majority votes, since the mode is not linear
in the predictors. 
\section{Experimental Setup}
\label{sec:setup}
We emulate the electorates of three U.S. presidential elections with two LLMs, one closed-source and one open-source. We evaluate them at three increasing levels of control: (i) the base LLM, (ii) prompt-based control, and (iii) decomposed control, which separates vote-relevant reasoning from group representation; for the open-source model, we add activation-level control. The base level is run several times per voter, with votes combined by majority.

\subsection{Experimental Design}
\label{sec:experimental_design}

We use one closed-source and one open-source model because the two allow different kinds of control. Claude Sonnet 4.6 is a state-of-the-art model available only through an API, so it can be controlled only through its prompt. Llama-3.1-8B-Instruct is a widely used open-weight model whose internal activations can be read and modified, which allows control inside the model \cite{rimsky2024steering}; its size also allows us to run it locally. Using both lets us test whether our findings hold for prompt-based and internal control. Comparing levels of control with the number of runs fixed shows the effect of control alone, and comparing numbers of runs at a fixed level shows the effect of repeated runs alone; for each setting we report both individual accuracy and group misrepresentation error.

\subsection{Datasets and Voter Profiles}
\label{subsec:datasets}

We use the American National Election Studies (ANES)
2012, 2016, and 2020 and include every respondent who reported a
two-party presidential vote and has complete profile data ($n = 3{,}793$,
$1{,}978$, and $5{,}141$; Table~\ref{tab:datasets}). Profiles state race
and ethnicity, gender, age, education, ideology, and political attention.
We group voters by race, gender, age, and education, and exclude groups
with fewer than 30 respondents.

\begin{table}[htbp]
\centering
\caption{Datasets used in the evaluation}
\label{tab:datasets}
\footnotesize
\resizebox{\columnwidth}{!}{%
\renewcommand{\arraystretch}{1.1}
\begin{tabular}{lllll}
\toprule
\textbf{Dataset} & \textbf{Election} & \textbf{Vote task} & \textbf{$N$} & \textbf{Role} \\
\midrule
ANES 2012 & Obama vs.\ Romney &Presidential vote& 3{,}793&Temporal validation \\
ANES 2016 & Clinton vs.\ Trump&Presidential vote & 1{,}978&Temporal validation \\
ANES 2020 & Biden vs.\ Trump&Presidential vote & 5{,}141&Main evaluation \\
\bottomrule
\end{tabular}}
\begin{minipage}{1\columnwidth}
\footnotesize
\emph{Note.} ANES data: \url{https://electionstudies.org/data-center/}. N counts respondents who reported a two-party presidential vote and have complete profile data.
\end{minipage}
\end{table}

\subsection{Models and Levels of Control}


\paragraph{Model} Claude Sonnet 4.6 is available only through an API
and can be controlled only through its prompt; Llama-3.1-8B-Instruct is open-weight, so its internal activations can also be modified
\cite{rimsky2024steering}. Both sample at temperature $\tau = 0.7$ \cite{larsen2025instability,shin2025enhancing}.\footnote{Claude: at most 60 output tokens. Llama: top-$p = 1$, at most 40 output tokens. An answer is read as a vote for the first candidate it names; answers naming no candidate (none for Claude, under 1\% for Llama) are excluded. Prompts and further details are in the supplementary
material.}
\paragraph{Levels of Control} We evaluate each model at three increasing
levels of control (prompts in the supplementary material). (i)
\emph{Base}: the model receives the voter profile and is asked for a
vote. (ii) \emph{Prompt-based control}: a single instruction is added,
asking the model to consider individual electoral factors such as
ideology and political interest, and to ``avoid systematically
under-representing the voting preferences of the voter's racial or ethnic
community'' (verbatim prompt). This instruction assumes that voters who self-identify with the same racial or ethnic group share a voting preference; whether such an instruction helps is part of what we test. (iii) \emph{Decomposed control}: the same two objectives are
placed in separate blocks, one on individual-level signals and one on
community representation, which is to be treated as an independent
consideration that does not override the profile, followed by the final
prediction. Both controls target race, the attribute of RQ1. For Llama,
we add (iv) \emph{activation-level control} by contrastive activation
addition (CAA) \cite{rimsky2024steering}: we take the hidden state of the
last prompt token at layer $\ell = 16$ for 100 voters who self-identify as White and 100 voters who self-identify with another racial or ethnic group (sampled at random from ANES 2020), and use the dominant direction separating the two sets, and add it
with coefficient $\alpha = 1$ during generation. We chose $\alpha$ as the
largest value that keeps the overall Democratic share within 10 points of
the uncontrolled model; larger values collapse almost all predictions to
one candidate. The base level is run once ($K = 1$) and five times
($K = 5$) per voter; the controlled levels are run once and compared with
the first base run on the same voters.

\subsection{Evaluation Metrics}

For each model, dataset, level of control, and number of runs, we report individual accuracy $\mathrm{Acc}_{\mathrm{ind}}$
(Eq.~\eqref{eq:individual_accuracy}), to check whether fairness gains come at the cost of prediction quality; the group errors $\varepsilon_r$ and the worst-group error $\varepsilon_{\max}$
(Eq.~\eqref{eq:misrep_error}), our primary fairness metric, together with the mean group error $\bar{\varepsilon}$ and the population-level error $\varepsilon_{\text{pop}}$. We call a group misrepresented when the 95\% confidence interval of its signed error excludes zero.
\section{Group Misrepresentation (RQ1)}
\label{sec:audit}
This section answers RQ1 using the base level, where no control is
applied, with one run ($K=1$) and five runs per voter ($K=5$). Since the model is not steered at either setting, any difference between them comes from repeated runs alone.

\subsection{Misrepresentation Across Models}
\label{subsec:audit-main}

\begin{figure}[htbp]
\centering
\includegraphics[width=0.9\columnwidth]{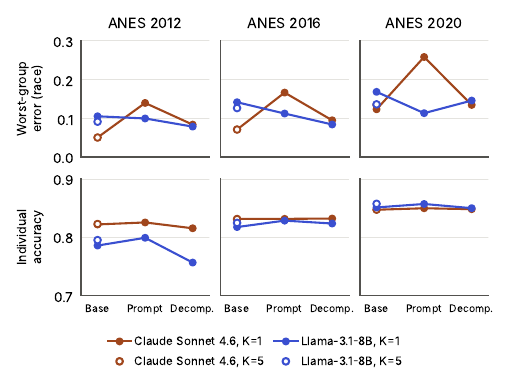}
\caption{Worst-group error over voters who self-identify with a racial or ethnic group (top) and individual accuracy (bottom) at each level of control, with one run per voter (solid) or five runs at the base level (hollow). Accuracy is stable across levels, while race-targeted control does not reduce the worst-group error: for Claude, prompt-based control raises it in every election.}
\label{fig:main_results}
\end{figure}

In ANES 2020, both models reach high accuracy (Claude 0.848, Llama 0.858)
and small population-level errors (3.1 and 1.4 points), yet misrepresent
particular voters who self-identify with a racial or ethnic group (Fig.~\ref{fig:overview}). 
For voters who self-identify as Native American, the voter choices emulated by Claude and the voter choices emulated by Llama place the Democratic share 13.5 and 13.6 points above the reported share, respectively (95\% intervals excluding zero),
while for voters who self-identify as Black, the error of both models stays within 1 point.
The models otherwise differ: 
The voter choices emulated by Claude places the Democratic share of voters who self-identify as White 4.1 points below their reported share, whereas the voter choices emulated by Llama places the Democratic share of voters who self-identify as Asian/NHPI 7.8 points above their reported share.
\begin{figure}[h]
\centering
\includegraphics[width=0.9\columnwidth]{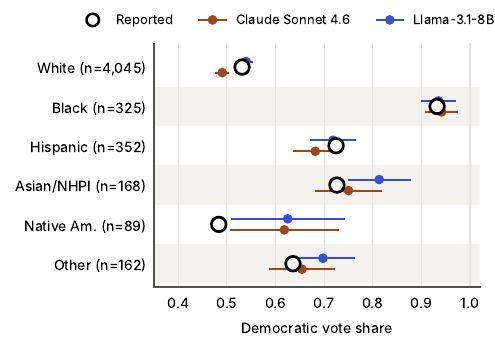}
\caption{
Signed misrepresentation error (share of emulated Democratic choices minus reported share of Democratic votes) for the voter choices emulated by Claude and by Llama, by self-identified racial group, ANES 2020 (base level, K = 5; bars: 95\% intervals). For both models, the share of emulated Democratic choices is furthest from the reported share for voters who self-identify as Native American (about 14 points higher) and within 1 point for voters who self-identify as Black.
}
\label{fig:per_group}
\end{figure}

Figure~\ref{fig:per_group} compares each group's share of emulated Democratic choices with its reported share. 
The largest gap for both models is for voters who self-identify as Native American (n = 89): the voter choices emulated by Claude places their Democratic share 13.5 points above the reported share, and the voter choices emulated by Llama 13.6 points above; as the smallest group, this is also the least precise estimate.
For voters who self-identify as Black or Hispanic, the groups most often examined, the share of emulated Democratic choices of both models stays within 5 points of the reported share.
The two the models then diverge: the voter choices emulated by Claude places the Democratic share of voters who self-identify as White below their reported share, whereas the voter choices emulated by Llama places the Democratic share of voters who self-identify as Asian/NHPI above their reported share.
Because the largest error falls on a group that a comparison limited to voters who self-identify as Black or Hispanic would miss, we evaluate the worst-group error over all groups.

\subsection{Robustness}
\label{sec:robustness}

\begin{figure}[h]
\centering
\includegraphics[width=\columnwidth]{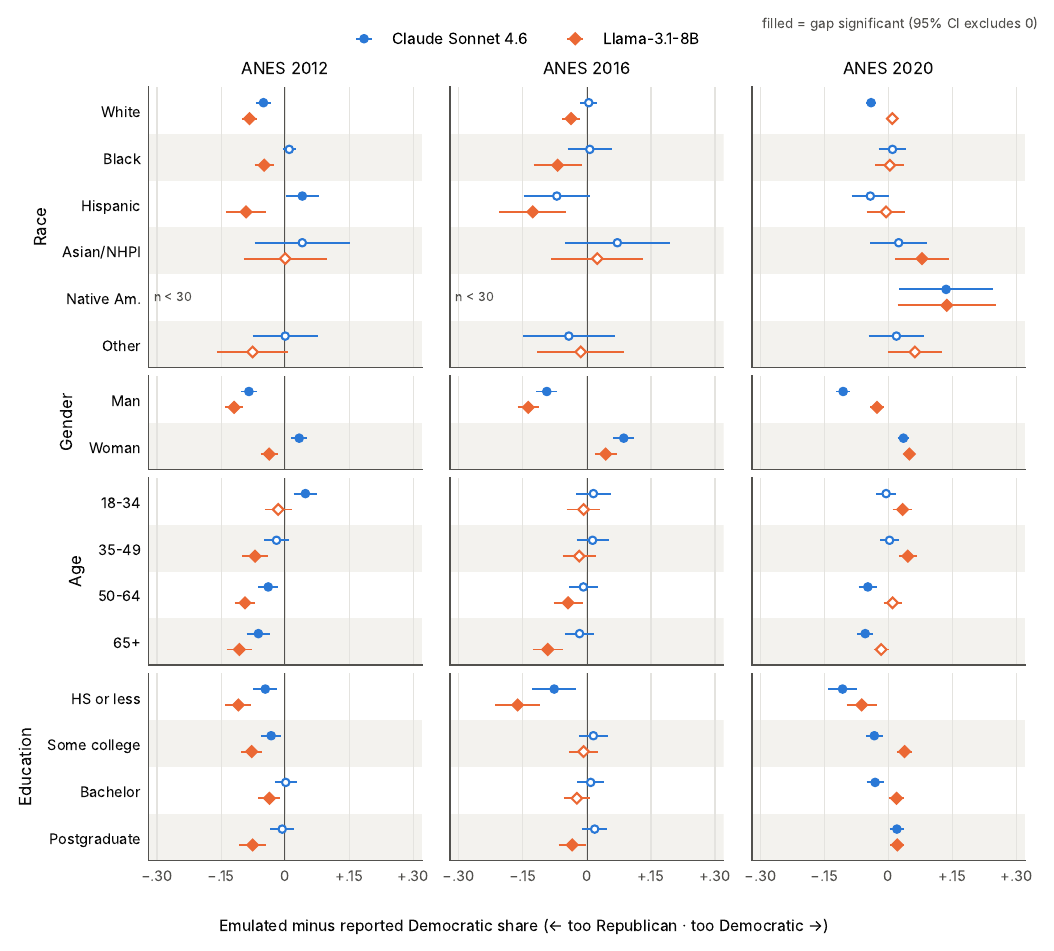}
\caption{Signed misrepresentation error (Share of emulated Democratic choices minus reported share) by group at the base level ($K=5$), with 95\% intervals; filled markers exclude zero. For voters who self-identify as male and voters with at most a high school education, the share of Democratic choices emulated by both models is below the reported share in every election, while the errors for racial groups vary across elections.}
\label{fig:robustness}
\end{figure}

We now test whether this is a stable property of the models or an artifact of three measurement choices: (1) a single run per voter, (2) race as the only grouping, and (3) a single election. We emulate every eligible respondent in ANES 2012, 2016, and 2020 ($n = 3{,}793$, $1{,}978$, and $5{,}141$) at the base level with $K=5$ runs per voter, and report each group's signed error (emulated minus reported share of Democratic votes) with a 95\% confidence interval, for all groups with at least 30 respondents (Fig.~\ref{fig:robustness})\footnote{An answer is read as a vote for the first candidate it names, ignoring neutral pairings such as ``between Biden and Trump''. Answers naming no candidate (none for Claude, under 1\% for Llama) are excluded.}.

\paragraph{Repeated runs.}
If misrepresentation were sampling noise, majority voting over repeated runs would reduce it. It does not. Claude gives the same vote in all five runs for 99.3--99.6\% of respondents, and Llama for 84--90\%. The worst-group error over voters who self-identify with a racial or ethnic group is statistically indistinguishable between one and five runs for both models in every election (paired bootstrap, $p \geq 0.51$). The errors are therefore systematic: the model makes the same mistake for the same voter on every run, and repeated sampling preserves it.

\paragraph{Other demographic attributes.}
Misrepresentation is not specific to race; it is largest and most consistent for gender and education. In all three elections, the share of Democratic choices emulated by Claude is lower than the reported share for voters who self-identify as male (by 8.5, 9.4, and 10.6 percentage points) and higher than the reported share for voters who self-identify as female (by 3.3, 8.6, and 3.5 points), with every 95\% interval excluding zero. As a result, the emulated gender gap is about three times the reported one. For voters with at most a high school education, the share of emulated Democratic choices is likewise lower than the reported share in every election (by 4.7, 7.6, and 10.8 points), widening the reported education gap. Llama errs in the same direction: the share of Democratic choices emulated by Llama is lower than the reported share for voters who self-identify as male (by 12.0, 13.7, and 2.7 points) and for voters with at most a high school education (by 11.0, 16.2, and 6.3 points) in every election, and its emulated gender gap is 2.3–3 times the reported one. 
Errors for racial groups are less stable. For Claude, no racial group is significantly misrepresented in 2016; for Llama, the share of emulated Democratic choices is below the reported share for voters in most self-identified racial groups in 2012 and 2016, in line with its overall shift towards the Republican candidate in those years.
The largest racial error, for voters who self-identify as Native American in 2020, appears in both models (share of emulated Democratic choices 13.5 and 13.6 points higher than the reported share) but rests on the smallest group ($n = 89$). The most consistent errors thus follow one pattern, which the models push gender and education groups further apart than they really are, exaggerating real differences between groups \cite{yu2024large}. These errors remain hidden by aggregate metrics: Claude's population-level error never exceeds 3.1 points, while its errors for large groups reach 10.8 points.

\paragraph{Finer subgroups.}
Splitting voters who self-identify with a racial or ethnic group by gender raises Claude's worst-group error to 0.139, 0.175, and 0.143. Because smaller groups yield noisier estimates, part of this increase is expected even without any additional bias. To separate the two, we shuffle gender labels within each voters who self-identify with a racial or ethnic group, which preserves group sizes and race-level errors, and recompute the worst-group error 1{,}000 times. The observed error exceeds this baseline in 2012 ($p = 0.010$) and 2016 ($p = 0.025$) but not in 2020 ($p = 1.00$); for Llama, only in 2016 ($p = 0.030$; $p = 0.077$ in 2012 and $0.72$ in 2020).

\subsection{Source of Misrepresentation}
\label{sec:source}

\begin{figure}[htbp]
    \centering
  \includegraphics[width=0.9\linewidth]{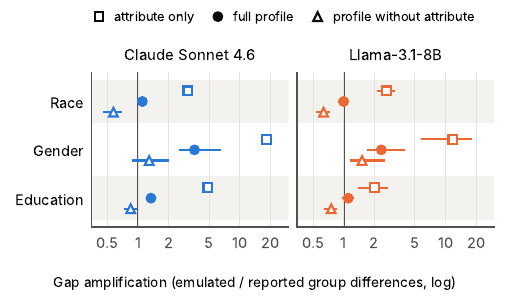}
\caption{Gap amplification on ANES 2020 (1 = real group differences reproduced; log scale; 95\% intervals). Given only the attribute, both models exaggerate group differences several-fold; the full profile still exaggerates gender and education gaps.}
  \label{fig:source}
\end{figure}

\begin{table}[htbp]
\centering
\caption{Group error when one attribute is left out of the profile (ANES 2020). Negative change = fairer.}
\label{tab:without}
\small
\resizebox{\columnwidth}{!}{%
\begin{tabular}{@{}llccccc@{}}
\toprule
& & \multicolumn{2}{c}{Worst-group error} & \multicolumn{2}{c}{Mean group error} & \\
\cmidrule(lr){3-4}\cmidrule(lr){5-6}
Model & Left out & Full $\to$ without & Change & Full $\to$ without & Change & Acc.\ change \\
\midrule
Claude & Race      & .124 $\to$ .185 & $+49\%$ & .043 $\to$ .061 & $+42\%$ & $-0.5$ \\
       & Gender    & .107 $\to$ .062 & $-42\%$ & .071 $\to$ .054 & $-25\%$ & $-0.3$ \\
       & Education & .109 $\to$ .039 & $-64\%$ & .048 $\to$ .021 & $-57\%$ & $+0.2$ \\
\midrule
Llama  & Race      & .136 $\to$ .124 & $-9\%$  & .049 $\to$ .069 & $+41\%$ & $\phantom{+}0.0$ \\
       & Gender    & .049 $\to$ .038 & $-22\%$ & .038 $\to$ .023 & $-39\%$ & $+0.5$ \\
       & Education & .063 $\to$ .072 & $+15\%$ & .035 $\to$ .034 & $-4\%$  & $+1.2$ \\
\bottomrule
\end{tabular}}
\end{table}

We use the profile the model sees for three attributes $A$ (race, gender, and education) on ANES 2020, using the base-level prompt: (i) \emph{attribute only}, a profile stating only $A$, repeated 20 times per group with alternating candidate order; (ii) the \emph{full profile}; and (iii) the full profile \emph{without} $A$.\footnote{Full and without-$A$ prompts use the same respondents and candidate order. Llama uses majority voting over
$K=5$ runs; Claude uses a single run, since its runs agree for 99.6\% of respondents.} We summarize each prompt by its \emph{gap amplification}, the slope of a weighted regression of emulated on reported share of Democratic votes across the groups of $A$ (weights: group size). A value of 1 means the emulation reproduces the real differences between groups, values above 1 mean it exaggerates them, and values below 1 mean it flattens them (Fig.~\ref{fig:source}).

Given only the attribute, Claude returns the same answer in all 20 repetitions for every group, assigning all voters who self-identify with a given group entirely to one candidate: with no other information, the model falls back on a deterministic group-to-party stereotype.
The resulting gaps are 3.1 (race), 18.1 (gender), and 4.8 (education) times the reported ones. Llama follows the same direction less deterministically (2.6, 11.8, and 2.0), and declines to answer half of the gender-only prompts, mostly the prompt stating a male voter (14 of 20).
The full profile dilutes the stereotype but does not remove it.
With the full profile, amplification falls close to 1 for race (Claude:
1.10, 95\% CI [1.00, 1.19]; Llama: 1.00 [0.90, 1.08]) but remains above 1
for gender (3.58 [2.55, 6.36]; 2.34 [1.75, 3.87]) and, for Claude,
education (1.34 [1.21, 1.48]; Llama: 1.11 [0.99, 1.23]). 

Leaving one attribute out of the profile shows which information helps and which harms (Table~\ref{tab:without}; paired bootstrap tests in supplementary material). 
Removing self-reported race increases the group error: without it, the mean group error rises by about 40\% for both Claude and Llama (p = 0.008 and 0.04), and for voters who self-identify as Black, the share of Democratic choices emulated by Claude moves from within 1 point of the reported share to 18.5 points lower. Without race, the models lean on the remaining profile attributes, which do not capture the differences in reported votes between voters who self-identify with different racial or ethnic groups.
Including gender increases the group error: without it, the largest group error falls by 42\% for Claude and 22\% for Llama (both $p < 0.001$), and Claude's emulated gender gap shrinks from 3.6 to 1.3 times the reported one. Including education increases the group error for Claude in the same way: without it, the largest group error falls by 64\% ($p = 0.002$), and the share of emulated Democratic choices for voters without a college degree is no longer 0.4–10.9 points lower than the reported share; for Llama, education makes no significant difference. Accuracy never decreases significantly. Which personal information the model receives therefore decides how fairly groups are represented, at no cost to accuracy.

\subsection*{Answering RQ1}

The vote-share error of emulated electorates is not equal across voters who self-identify with a racial or ethnic group. In ANES 2020, the share of Democratic choices emulated by both models is about 14 points higher than the reported share for voters who self-identify as Native American, while for voters who self-identify as Black it stays within 1 point. Which voters who self-identify with a racial or ethnic group are misrepresented, however, varies across elections and models. The unequal errors also extend beyond race: in every election, the share of Democratic choices emulated by both models is lower than the reported share for voters who self-identify as male and for voters with at most a high school education, exaggerating real differences between groups. They arise because the models weight some demographic labels too heavily: removing gender or education from the profile reduces the errors, whereas removing self-reported race increases them. Giving the model a voter's gender makes the largest gender-group error 29\% (Llama) to 73\% (Claude) larger, and giving it education makes Claude's largest education-group error 2.8 times larger ($p \le 0.002$), without improving accuracy.



\section{Mitigating Misrepresentation (RQ2)}
\label{sec:mitigation}
Misrepresentation is systematic, persists across repeated runs, and is
strongest for gender and education. We test whether prompt-based and
decomposed control reduce it without lowering individual accuracy,
whether their effect extends beyond race, which both target, and why
gains for individual voters do not always carry over to the electorate.

\subsection{Effect of Aggregation}
\label{sec:aggregation}
Before controlling the representation of each voter, we test whether the aggregation step alone can reduce misrepresentation. We compare majority voting over $K=5$ runs ($\hat{\theta}_r$) with probability aggregation ($\tilde{\theta}_r$), which averages each voter's share of Democratic votes $\hat{p}_j$, at the base level on ANES 2012, 2016, and 2020. By Lemma~\ref{lem:transfer}, probability aggregation is never worse than the worst single run; majority voting has no such guarantee.
 
The guarantee holds in practice, but brings little benefit. For Claude,
whose runs agree for over 99\% of voters, the two rules differ by at most 1.5 points for any group. For Llama, whose runs disagree for 10--16\% of voters, the worst-group error changes by at most 1.4 points across race, gender, age, and education, without a consistent direction: probability aggregation is slightly better in some cases and slightly worse in others (e.g., gender in 2020: 4.9 vs.\ 5.4 points). The exaggerated gaps also remain, Llama's emulated gender gap is 2.6, 2.9, and 2.4 times the reported one under probability aggregation, against 2.6, 2.9, and 2.3 under majority voting. Majority voting exceeds the error of the worst single run for only 1 of 46 groups, and the error of the average run for 16 of 46. When a model misjudges a voter, it does so in most of its runs, so
$\hat{p}_j$ is close to 0 or 1 and both rules return the same group
shares. Reducing misrepresentation requires changing how each voter is represented before aggregation.

\subsection{Effect of Control}
\label{sec:control_effect}
 
We emulate every voter once more at the prompt-based and decomposed levels
and compare each with the base level on the same voters, using one run per
voter at every level (Table~\ref{tab:control}, Fig.~\ref{fig:main_results}) and Fig.~\ref{fig:control} shows the resulting error of each group in 2020.
 
\begin{table}[t]
\centering
\caption{Accuracy, population-level error, and worst-group error $\varepsilon_{\max}$ by attribute at each level of control ($K = 1$, same voters). $^{*}$: differs from base ($p < 0.05$, paired bootstrap).}
\label{tab:control}
\small
\resizebox{\columnwidth}{!}{%
\begin{tabular}{@{}lllccccc@{}}
\toprule
& & & & & \multicolumn{3}{c}{$\varepsilon_{\max}$} \\
\cmidrule(l){6-8}
Model & Year & Level & Acc. & Pop. & Race & Gender & Educ. \\
\midrule
Claude & 2012 & Base & .822 & .026 & .051 & .085 & .048 \\
 &  & Prompt & .825 & .043$^{*}$ & .140$^{*}$ & .083 & .091$^{*}$ \\
 &  & Decomposed & .815$^{*}$ & .050$^{*}$ & .084 & .095$^{*}$ & .078$^{*}$ \\
\addlinespace[2pt]
 & 2016 & Base & .832 & .001 & .071 & .094 & .076 \\
 &  & Prompt & .832 & .019 & .167 & .108 & .043 \\
 &  & Decomposed & .832 & .002 & .095 & .084 & .049$^{*}$ \\
\addlinespace[2pt]
 & 2020 & Base & .847 & .031 & .124 & .107 & .109 \\
 &  & Prompt & .850 & .026$^{*}$ & .258$^{*}$ & .095$^{*}$ & .086$^{*}$ \\
 &  & Decomposed & .848 & .047$^{*}$ & .135 & .112 & .101 \\
\midrule
Llama & 2012 & Base & .786 & .083 & .106 & .127 & .122 \\
 &  & Prompt & .799$^{*}$ & .054$^{*}$ & .100 & .094$^{*}$ & .100$^{*}$ \\
 &  & Decomposed & .757$^{*}$ & .019$^{*}$ & .079 & .054$^{*}$ & .069$^{*}$ \\
\addlinespace[2pt]
 & 2016 & Base & .818 & .047 & .142 & .140 & .171 \\
 &  & Prompt & .829$^{*}$ & .034$^{*}$ & .113 & .129 & .153 \\
 &  & Decomposed & .824 & .004$^{*}$ & .085 & .081$^{*}$ & .117$^{*}$ \\
\addlinespace[2pt]
 & 2020 & Base & .851 & .014 & .169 & .052 & .060 \\
 &  & Prompt & .857 & .005 & .114 & .049 & .057 \\
 &  & Decomposed & .850 & .016 & .146 & .057 & .053 \\
\bottomrule
\end{tabular}}
\end{table}

\begin{figure}[htbp]
\centering
\includegraphics[width=0.9\columnwidth]{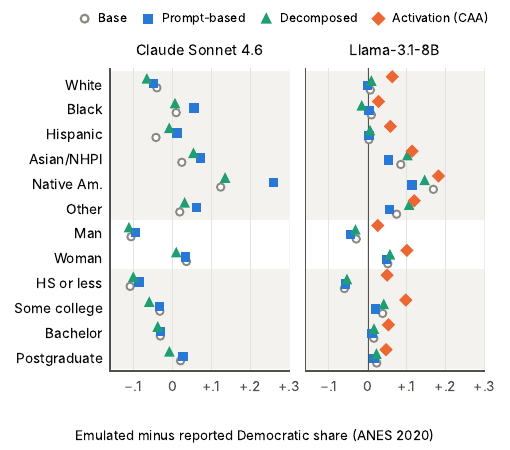}
\caption{Signed misrepresentation error by group on ANES 2020 at each level of control ($K = 1$, same voters). Prompt-based control doubles Claude's error for voters who self-identify as Native American; activation-level control shifts every Llama group toward the Democratic candidate.}
\label{fig:control}
\end{figure}
 
Accuracy changes by at most 1.3 points at either level, except for Llama's
decomposed level in 2012 ($-2.9$). Any change in misrepresentation
therefore comes at almost no cost to individual accuracy, and
accuracy again fails to reveal it.
 
For Claude, the prompt-based instruction moves the shares of emulated Democratic choices for voters in different self-identified racial groups further apart than the reported shares. Where the share of emulated Democratic choices was already higher than the reported share, the gap widens: for voters who self-identify as Native American, from 12.4 to 25.8 points higher in 2020, and for voters who self-identify as Asian/NHPI, from 4.0 to 14.0 points higher in 2012.
The worst racial error roughly doubles (0.051 to 0.140
in 2012, 0.124 to 0.258 in 2020, both $p < 0.05$). The decomposed level
has the same direction but a smaller effect (0.084 and 0.135; not
significant). Neither level reduces the gender errors: for voters who self-identify as male, the share of emulated Democratic choices remains 8–11 points lower than the reported share at every level and in every election.

In 2012 and 2016, where the overall share of Democratic choices emulated by Llama is below the reported share, the decomposed level reduces the population-level error from 8.3 to 1.9 and from 4.7 to 0.4 points, and with it the worst-group error for gender
(0.127 to 0.054; 0.140 to 0.081) and education (0.122 to 0.069; 0.171 to
0.117). The improvement lifts all groups together rather than correcting specific ones: for voters who self-identify as male and voters with at most a high school education, the share of emulated Democratic choices remains 5–12 points lower than the reported share. In
2020, where the base emulation has no overall shift, neither level changes
any worst-group error significantly.
 
\subsection{Gaps under Control}
\label{sec:gaps_control}
 
We next ask whether control reduces the exaggeration of real differences.
For gender, control
reduces amplification for Claude (2020: 3.58 to 3.33 prompt-based and
3.20 decomposed; $p < 0.05$ in all three elections at the decomposed
level), mainly by moving the emulated share for voters who self-identify as female closer to their reported share, but the
emulated gender gap remains 2.6--3.6 times the reported one at every level
and election. For Llama, gender amplification falls significantly only in
2016 (2.96 to 2.56) and remains at least 2.4 everywhere. For race, the
prompt-based level increases Claude's amplification in every election
(2020: 1.10 to 1.27; $p < 0.05$) and the decomposed level in 2016 and 2020
(2020: 1.22), so racial differences become more exaggerated than without
control. Education gaps change little at either
level. No level of control brings the exaggerated gaps close to the
reported ones.
 
\subsection{Prompt vs.\ Activation Control}
\label{sec:prompt_vs_activation}
 
We compare prompt-level and activation-level control on the same model and voters (Llama, ANES 2020, $K = 1$). The decomposed level and activation-level control agree on 91.3\% of voters ($\kappa = 0.82$). This agreement is lower than each method's agreement with the base level (93.3\% and 92.7\%), so it reflects the voters that neither method changes rather than a shared effect. The two methods change the vote of 338 and 367 voters, respectively, but only 133 voters under both.
The direction of the changes also differs. The decomposed level moves
voters in both directions (175 toward the Democratic and 163 toward the
Republican candidate), leaving the overall share unchanged. Activation control moves voters almost only toward the Democratic candidate (315 vs. 52), which raises the population-level error from 1.4 to 6.6 points and raises the share of emulated Democratic choices for every group (Fig.~\ref{fig:control}). It worsens the
worst-group error for gender (0.052 to 0.101; $p < 0.001$) and education
(0.060 to 0.098; $p = 0.08$) without reducing it for race (0.169 to
0.182), while accuracy is unchanged ($-0.6$ points). Its lower education
amplification (1.10 to 0.85) reflects this uniform shift, not a
correction of particular groups. Prompt-level and activation-level control
are therefore distinct interventions, and results obtained with one do
not transfer to the other.

\subsection*{Answering RQ2}
Not with the interventions we tested. Neither aggregation nor control at the level of individual voters reliably reduces misrepresentation.
Probability aggregation changes group errors by at most 1.5 points.
Race-targeted prompts leave the largest errors, for gender and education, in place, and for Claude they exaggerate racial differences further. Where control helps, as for Llama in 2012 and 2016, it does so by removing an overall partisan shift rather than by correcting the groups that are misrepresented. Activation-level control shifts the whole electorate and worsens misrepresentation. Aligning emulated electorates therefore requires controls that adjust how strongly each demographic attribute shapes a voter's representation, and evaluation at the level of groups, since individual accuracy does not reveal whether a control helps or harms.

\section{Conclusion}
AI representation of voters can be accurate for individuals yet unfair to groups. About 85\% of choices are generated correctly, while some groups are misjudged by 10 points or more. 
Removing gender from the voter profile reduces the largest group error of the the models by 22–42\%, and removing education reduces it by up to 64\%, whereas removing self-reported race increases the mean group error by over 40\%, all without lowering accuracy.
Common alignment strategies do not reduce these errors and can double them. Future work will test combinations of voter information, and alignment interventions that adjust how strongly each attribute shapes AI choices, across other elections, countries, and multi-party systems.
\begin{acks}
\parbox{\linewidth}{This work is supported by a UKRI Future Leaders Fellowship (UKRI3495): ``\emph{Smart City Commons of Trustworthy Distributed Intelligence - ARTIO 2.0}''}
\end{acks}

\newpage
\bibliographystyle{ACM-Reference-Format}
\bibliography{sample}

\newpage
\appendix
\section{Theory and Proofs}
\label{app:theory}
 
\subsection{Notation}
\label{app:notation}
 
Table~\ref{tab:notation} summarizes the notation. Let
$\mathcal{D} = \{(x_j, a_j, y_j)\}_{j=1}^{N}$ be the surveyed voters, where
$x_j$ is the voter's profile, $a_j \in \mathcal{G}$ is the voter's group
under a demographic attribute (race, gender, age, or education), and
$y_j \in \{0,1\}$ is the reported vote ($1$ = Democrat). For a group
$r \in \mathcal{G}$, let $J_r = \{j : a_j = r\}$ and $N_r = |J_r|$.
 
\begin{table}[h]
\centering
\caption{Notation.}
\label{tab:notation}
\small
\begin{tabular}{@{}ll@{}}
\toprule
Symbol & Meaning \\
\midrule
$\mathcal{G}$, $r$ & Set of groups under one attribute; a group \\
$J_r$, $N_r$ & Voters in group $r$; their number \\
$y_j$ & Reported vote of voter $j$ ($1$ = Democrat) \\
$\hat{y}_{j,k}$ & Emulated vote of voter $j$ in run $k = 1,\dots,K$ \\
$\hat{p}_j$ & Share of runs voting Democrat, $\frac{1}{K}\sum_k \hat{y}_{j,k}$ \\
$\hat{y}_j$ & Majority vote, $\mathbb{1}[\hat{p}_j > 1/2]$ \\
$\theta_r$ & Reported Democratic share, $\frac{1}{N_r}\sum_{j \in J_r} y_j$ \\
$\hat{\theta}_r$ & Emulated share under majority voting, $\frac{1}{N_r}\sum_{j \in J_r} \hat{y}_j$ \\
$\tilde{\theta}_r$ & Emulated share under probability aggregation, $\frac{1}{N_r}\sum_{j \in J_r} \hat{p}_j$ \\
$\hat{\theta}^{(k)}_r$ & Emulated share from run $k$ alone \\
$\varepsilon_r$ & Misrepresentation error, $|\hat{\theta}_r - \theta_r|$ \\
$\varepsilon_{\max}$, $\bar{\varepsilon}$ & Worst-group and mean group error over $\mathcal{G}$ \\
\bottomrule
\end{tabular}
\end{table}
 
\subsection{Probability Aggregation Preserves Error Bounds}
\label{app:lemma}
 
We state the result for any convex combination of emulations, which covers
both averaging repeated runs of one model and combining several models.
Let $f_1, \dots, f_M$ be emulations that assign each voter a Democratic
probability $f_m(j) \in [0,1]$, with group shares
$\tilde{\theta}_r(f_m) = \frac{1}{N_r}\sum_{j \in J_r} f_m(j)$, and let
$F = \sum_{m=1}^{M} \lambda_m f_m$ with $\lambda_m \ge 0$ and
$\sum_m \lambda_m = 1$.
 
\begin{lemma}[restated]
For every group $r \in \mathcal{G}$,
\[
\bigl|\tilde{\theta}_r(F) - \theta_r\bigr|
\;\le\; \sum_{m=1}^{M} \lambda_m \bigl|\tilde{\theta}_r(f_m) - \theta_r\bigr|
\;\le\; \max_{m} \bigl|\tilde{\theta}_r(f_m) - \theta_r\bigr|.
\]
In particular, if every $f_m$ has error at most $\varepsilon$ for every
group, so does $F$.
\end{lemma}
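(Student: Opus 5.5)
The plan is to prove linearity of the group share in the emulation first, then apply the triangle inequality, exactly as in the short proof of Lemma~\ref{lem:transfer}, but keeping the intermediate weighted sum.

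\emph{Step 1 (linearity).} For a fixed group $r$, the map $f \mapsto \tilde{\theta}_r(f) = \frac{1}{N_r}\sum_{j\in J_r} f(j)$ is linear in $f$. With $F = \sum_m \lambda_m f_m$, exchanging the two finite sums gives
\[
\tilde{\theta}_r(F) = \frac{1}{N_r}\sum_{j\in J_r}\sum_{m=1}^{M}\lambda_m f_m(j) = \sum_{m=1}^{M}\lambda_m\,\tilde{\theta}_r(f_m).
\]
Because $\sum_m \lambda_m = 1$, we may write $\theta_r = \sum_m \lambda_m \theta_r$. Hence
\[
\tilde{\theta}_r(F)-\theta_r = \sum_m \lambda_m\bigl(\tilde{\theta}_r(f_m)-\theta_r\bigr).
\]
We note in passing that $F(j)\in[0,1]$, since it is a convex combination of values in $[0,1]$. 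So $F$ is itself a valid emulation and $\tilde{\theta}_r(F)$ is well defined.

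\emph{Step 2 (first inequality).} Apply the triangle inequality to the finite sum from Step 1. Since $\lambda_m \ge 0$, we have $|\lambda_m z| = \lambda_m |z|$, which yields
\[
\bigl|\tilde{\theta}_r(F)-\theta_r\bigr| \le \sum_m \lambda_m\bigl|\tilde{\theta}_r(f_m)-\theta_r\bigr|.
\]

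\emph{Step 3 (second inequality).} Bound each term by $\max_m|\tilde{\theta}_r(f_m)-\theta_r|$ and use $\sum_m\lambda_m=1$. A weighted average never exceeds the maximum of the averaged quantities.

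\emph{Step 4 (conclusion).} The ``in particular'' clause follows by applying Steps 2 and 3 to each group $r$ separately, with the common bound $\varepsilon$.

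There is no real obstacle here. The only point needing care is to make the two hypotheses explicit where they are used: nonnegativity of the weights in Step 2, and normalization of the weights both for writing $\theta_r$ as a convex combination in Step 1 and for the final bound in Step 3. Finally, I would remark that repeated runs of one model are the special case $f_k(j)=\hat{y}_{j,k}$ with $\lambda_k = 1/K$. This recovers $\tilde{\theta}_r$ as an average of the per-run shares $\hat{\theta}^{(k)}_r$, which is the claim used in Section~\ref{sec:aggregation}.
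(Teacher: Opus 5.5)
Your proposal is correct and follows the paper's proof essentially step for step: linearity of $\tilde{\theta}_r$ in the emulation, rewriting $\theta_r$ using $\sum_m \lambda_m = 1$, the triangle inequality with nonnegative weights, and bounding a convex combination by its largest term. Your side remarks that $F(j)\in[0,1]$ and on the repeated-runs case $f_k(j)=\hat{y}_{j,k}$, $\lambda_k = 1/K$ also match what the paper states around the lemma.
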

 
\begin{proof}
The group share is linear in the emulation, so
$\tilde{\theta}_r(F) = \sum_m \lambda_m \tilde{\theta}_r(f_m)$. Since
$\sum_m \lambda_m = 1$,
\[
\tilde{\theta}_r(F) - \theta_r
= \sum_{m=1}^{M} \lambda_m \bigl(\tilde{\theta}_r(f_m) - \theta_r\bigr).
\]
Taking absolute values and applying the triangle inequality gives the first
inequality; the second holds because a convex combination never exceeds
its largest term.
\end{proof}
 
\paragraph{Repeated runs.}
Setting $f_k(j) = \hat{y}_{j,k}$ and $\lambda_k = 1/K$ gives
$F(j) = \hat{p}_j$, so the probability-aggregated share satisfies
\[
\bigl|\tilde{\theta}_r - \theta_r\bigr|
\;\le\; \frac{1}{K}\sum_{k=1}^{K} \bigl|\hat{\theta}^{(k)}_r - \theta_r\bigr|
\;\le\; \max_k \bigl|\hat{\theta}^{(k)}_r - \theta_r\bigr|:
\]
averaging runs is never worse than the average run.
 
\paragraph{Majority voting has no such guarantee.}
Majority voting applies the threshold $\mathbb{1}[\hat{p}_j > 1/2]$ before
averaging, which is not linear. Suppose every voter in group $r$ has
$\hat{p}_j = 0.6$ and $\theta_r = 0.6$. Each run, and probability
aggregation, reproduces the group share in expectation
($\tilde{\theta}_r = 0.6$, $\varepsilon_r = 0$), whereas majority voting
assigns every voter to the Democrat ($\hat{\theta}_r = 1$,
$\varepsilon_r = 0.4$). Majority voting pushes uncertain voters toward the
locally more likely party and can therefore exaggerate differences between
groups. Section~6.1 tests both rules empirically.
 
\subsection{Interpretation}
\label{app:interpretation}
 
The lemma identifies the property an aggregation rule needs for gains at
the level of individual voters to survive aggregation: if each emulation
represents every group within $\varepsilon$, so does their probability
average. It does not imply that aggregation removes misrepresentation, nor
that controlling individual representations is sufficient under majority
voting. When runs share the same directional error, every term on the
right-hand side is large, and the bound offers no improvement.
 
\subsection{Repeated Runs under Independent Noise}
\label{app:noise}
 
The following proposition describes the idealized case in which repeated
runs make independent errors. It is a reference point rather than a
description of our setting.
 
\begin{proposition}
Let $B_{j,k} = \mathbb{1}[\hat{y}_{j,k} \neq y_j]$ indicate an error for
voter $j$ in run $k$. Suppose that, given $x_j$, the $B_{j,1}, \dots,
B_{j,K}$ are independent with $\Pr(B_{j,k} = 1 \mid x_j) \le \delta <
1/2$. Then the majority vote is wrong with probability
\[
\Pr\bigl(\hat{y}_j \neq y_j \mid x_j\bigr)
\;\le\; \Pr\Bigl(\tfrac{1}{K}\textstyle\sum_{k=1}^{K} B_{j,k} \ge \tfrac{1}{2} \,\Big|\, x_j\Bigr)
\;\le\; \exp\!\Bigl(-2K\bigl(\tfrac{1}{2} - \delta\bigr)^{2}\Bigr).
\]
\end{proposition}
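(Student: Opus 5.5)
The plan has two steps: first reduce the event of a wrong majority vote to a threshold event on the run-level error count, then bound that event with Hoeffding's inequality. Throughout, everything is conditioned on $x_j$. The reported vote $y_j$ is fixed, so conditioning makes the error indicators $B_{j,1},\dots,B_{j,K}$ independent $\{0,1\}$-valued variables with conditional means $\mu_k \le \delta$.

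\emph{Step 1 (event inclusion).} The votes are binary, so a wrong majority vote means that at least half of the runs disagree with $y_j$. There are two cases.
\begin{itemize}
\item If $y_j = 1$, then $\hat{y}_j = 0$ means $\hat{p}_j \le 1/2$. Hence the share of erroneous runs, $1-\hat{p}_j = \frac{1}{K}\sum_k B_{j,k}$, is at least $1/2$.
\item If $y_j = 0$, then $\hat{y}_j = 1$ means $\hat{p}_j > 1/2$. Here $\frac{1}{K}\sum_k B_{j,k} = \hat{p}_j > 1/2 \ge 1/2$.
\end{itemize}
In both cases $\{\hat{y}_j \ne y_j\} \subseteq \{\bar{B}_j \ge 1/2\}$, where $\bar{B}_j = \frac{1}{K}\sum_k B_{j,k}$. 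This gives the first inequality. Using the non-strict threshold absorbs the tie convention, although the paper takes $K$ odd anyway.

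\emph{Step 2 (concentration).} Write $\bar{\mu} = \frac{1}{K}\sum_k \mu_k \le \delta$. Then
\[
\{\bar{B}_j \ge \tfrac12\} \subseteq \{\bar{B}_j - \bar{\mu} \ge \tfrac12 - \delta\}.
\]
Set $t = \tfrac12 - \delta > 0$. Hoeffding's inequality for independent variables in $[0,1]$ gives
\[
\Pr(\bar{B}_j - \bar{\mu} \ge t \mid x_j) \le \exp(-2Kt^2),
\]
which is the stated bound. Hoeffding does not need identical distributions, so the per-run bound $\Pr(B_{j,k}=1\mid x_j)\le\delta$ suffices, and averaging the means is harmless. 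Alternatively one can use a coupling argument: dominate each $B_{j,k}$ by a $\mathrm{Bernoulli}(\delta)$ variable, then apply the Chernoff bound.

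\emph{Main obstacle.} The only delicate point is justifying Step 2 when the $\mu_k$ differ and may be strictly below $\delta$. The centering at $\bar{\mu}$ followed by the monotone shift to $\delta$ handles this. One must also make sure independence is used only conditionally on $x_j$, since the proposition's hypothesis is stated conditionally. With those two points in place the result follows.
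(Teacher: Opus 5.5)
Your proof is correct and takes the same route as the paper: reduce a wrong majority vote to the event $\sum_k B_{j,k} \ge K/2$, then apply Hoeffding's inequality centred at the conditional mean, using $\frac{1}{2} - \bar{\mu} \ge \frac{1}{2} - \delta > 0$ to shift the threshold. Your case split on $y_j$, with the non-strict threshold covering ties, only spells out the event inclusion that the paper states in one line.
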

 
\begin{proof}
Let $S_K = \sum_{k=1}^{K} B_{j,k}$, so $\mathbb{E}[S_K \mid x_j] \le
K\delta$. The majority vote can be wrong only if at least half of the runs
err, that is, if $S_K \ge K/2$, and
$K/2 - \mathbb{E}[S_K \mid x_j] \ge K(1/2 - \delta) > 0$. Hoeffding's
inequality for independent variables in $[0,1]$ gives
$\Pr\bigl(S_K - \mathbb{E}[S_K \mid x_j] \ge K(1/2 - \delta)\bigr) \le
\exp\bigl(-2K(1/2 - \delta)^2\bigr)$.
\end{proof}
 
Under independent noise, more runs would therefore drive errors toward
zero. Our runs violate this assumption: Claude gives the same vote in all
five runs for 99.3--99.6\% of voters and Llama for 84--90\%, and one and
five runs yield statistically indistinguishable worst-group errors
(Section~5.2). Errors are systematic rather than
independent, so increasing $K$ cannot remove them.

\section{Experiment Design}
\label{app:design}
 
\subsection{Data}
\label{app:data}
 
We use the American National Election Studies (ANES) for 2012, 2016, and
2020. A respondent is included if they reported voting for one of the two
major-party presidential candidates and have complete data on race and
ethnicity, gender, age, education, ideology, and political attention. This
gives $n = 3{,}793$ (2012), $1{,}978$ (2016), and $5{,}141$ (2020)
voters. Group shares are unweighted, so they describe the surveyed
respondents rather than the national electorate.
 
\paragraph{Harmonization.} Each attribute is mapped to the same categories
in all three years (Table~\ref{tab:groups}). The ANES race codes differ
between years (for example, code 3 denotes Asian voters in 2012 and 2016
but Hispanic voters in 2020) and are recoded accordingly. Groups with
fewer than 30 respondents are excluded from all group-level analyses; this
removes Native American voters in 2012 and 2016.
 
\begin{table}[h]
\centering
\caption{Demographic groups used in the analysis.}
\label{tab:groups}
\small
\begin{tabular}{@{}ll@{}}
\toprule
Attribute & Groups \\
\midrule
Race & White, Black, Hispanic, Asian/NHPI, Native American, Other \\
Gender & Man, Woman \\
Age & 18--34, 35--49, 50--64, 65+ \\
Education & High school or less, Some college, Bachelor's, Postgraduate \\
\bottomrule
\end{tabular}
\end{table}
 
\paragraph{Profiles.} Each voter is described by six fields in a fixed
order, for example: \emph{``Race/ethnicity: White; Gender: woman; Age: 40;
Education: a high school diploma or less; Ideology: liberal; Pays
attention to politics: always.''} Ideology uses the seven-point ANES scale
and political attention its five-point scale, both in their verbal labels.
The order in which the two candidates are named in the question is
randomized per voter with a fixed seed and kept identical across all
models and levels of control.
 
\subsection{Models}
\label{app:models}
 
We use one model from each access regime. Claude Sonnet 4.6
(\texttt{claude-sonnet-4-6}) is queried through the Anthropic API with
temperature $0.7$ and a limit of 60 output tokens. Llama-3.1-8B-Instruct
runs locally in bfloat16 on a single NVIDIA A100 GPU with temperature
$0.7$, top-$p = 1$, and a limit of 40 output tokens, which lets its
answers finish after the candidate's name.
 
\paragraph{Answer parsing.} An answer is read as a vote for the first
candidate it names (by full name, surname, or first name), after removing
neutral pairings such as ``between Biden and Trump''. On answers that name
only one candidate, this rule agrees with exact matching in every case.
Answers that name no candidate are excluded: none for Claude and fewer
than 1\% for Llama in every election.
 
\subsection{Levels of Control}
\label{app:levels}
 
Every voter is emulated at three levels of control, using the prompts in
Appendix~\ref{app:prompts}:
\begin{itemize}
\item \emph{Base:} the profile and the vote question only.
\item \emph{Prompt-based:} adds one instruction to consider individual
  electoral factors and to avoid under-representing the voting
  preferences of the voter's racial or ethnic community.
\item \emph{Decomposed:} separates this into a reasoning block on
  individual signals, an independent block on community representation,
  and a final prediction.
\end{itemize}
For Llama, we additionally apply \emph{activation-level} control
(Appendix~\ref{app:caa}) to the base prompt. The number of runs per voter,
$K$, is a separate dimension: base-level emulations use $K = 5$ (run~1
alone gives the $K = 1$ comparison), and controlled emulations use
$K = 1$, so that all levels are compared on the same footing.
 
\subsection{Activation-Level Control}
\label{app:caa}
 
We use Contrastive Activation Addition (CAA). From the ANES 2020 sample, we draw 100 voters who self-identify as White and 100 voters who self-identify as Black, Hispanic, or Asian/NHPI (fixed seed; the two smallest groups are excluded from the contrast), and record the hidden state of the last prompt token at layer 16 of 32 for their base prompts. The steering direction is the top right-singular vector of the activations of the second sample centred on the mean of the first, used because it explains 43.5\%
of their variance (above a pre-set threshold of 40\%; otherwise the mean
difference is used). Its sign is aligned with the mean difference
(cosine $0.95$), and it is normalized to unit length. During generation,
$\alpha$ times this vector is added to the hidden state of the last token
at layer 16.
 
\paragraph{Choice of $\alpha$.} We set $\alpha$ before the main run as the
largest value that keeps the overall Democratic share within 10 points of
the uncontrolled model, on the first 200 voters of ANES 2020
(Table~\ref{tab:alpha}). Larger values collapse almost all predictions to
the Democratic candidate and degrade the fluency of the answers. We use
$\alpha = 1$.
 
\begin{table}[h]
\centering
\caption{Overall Democratic share under activation-level control on 200
ANES 2020 voters (uncontrolled Llama: 0.58; reported: 0.58).}
\label{tab:alpha}
\small
\begin{tabular}{@{}lcccc@{}}
\toprule
$\alpha$ & 1 & 2 & 4 & 8 \\
\midrule
Democratic share & 0.65 & 0.73 & 0.99 & 1.00 \\
\bottomrule
\end{tabular}
\end{table}
 
\subsection{Experiments}
\label{app:experiments}
 
Table~\ref{tab:experiments} lists the experiments behind each part of the
results.
 
\begin{table}[h]
\centering
\caption{Experiments. All use the voters described in
Appendix~\ref{app:data}.}
\label{tab:experiments}
\resizebox{1\columnwidth}{!}{%
\begin{tabular}{@{}llll@{}}
\toprule
Section & Elections & Level of control & Runs \\
\midrule
Sec.5.2 Robustness & 2012, 2016, 2020 & Base & $K = 5$ \\
Sec.5.3 Source & 2020 & Base, three profile variants & $K = 1$ or $5$ \\
Sec.6.1 Aggregation & 2012, 2016, 2020 & Base & $K = 5$ \\
Sec.6.2 Control & 2012, 2016, 2020 & Base, prompt-based, decomposed & $K = 1$ \\
Sec.6.4 Prompt vs.\ activation & 2020 & Decomposed, activation (Llama) & $K = 1$ \\
\bottomrule
\end{tabular}}
\end{table}
 
\paragraph{Source of misrepresentation.} For each attribute $A$ (race,
gender, education), we compare three profiles: $A$ alone (one prompt per
group, repeated 20 times with alternating candidate order), the full
profile, and the full profile without $A$. Llama uses $K = 5$ for the
latter two; Claude uses $K = 1$, as its runs agree for 99.6\% of voters.
 
\subsection{Statistical Analysis}
\label{app:stats}
 
\paragraph{Group errors.} For each group, the 95\% confidence interval of
the signed error $\hat{\theta}_r - \theta_r$ is computed from the paired
per-voter differences $\hat{y}_j - y_j$ as their mean $\pm\, 1.96$
standard errors. An error is called significant when this interval
excludes zero.
 
\paragraph{Comparisons.} Differences between conditions evaluated on the
same voters (one vs.\ five runs; levels of control; aggregation rules;
profile variants) are tested with a paired bootstrap that resamples
voters 1{,}000 times; we report percentile intervals and two-sided
$p$-values.
 
\paragraph{Finer subgroups.} To test whether the larger worst-group error
across race $\times$ gender subgroups exceeds what smaller samples alone
would produce, we shuffle gender within each racial group, which keeps
group sizes and race-level errors fixed, recompute the worst-group error
1{,}000 times, and report the share of shuffles at least as large as the
observed value.
 
\paragraph{Gap amplification.} For an attribute with groups
$r \in \mathcal{G}$, gap amplification is the slope of a weighted
least-squares regression of emulated on reported group shares,
\[
\beta = \frac{\sum_r N_r (\theta_r - \bar{\theta})(\hat{\theta}_r - \bar{\hat{\theta}})}
             {\sum_r N_r (\theta_r - \bar{\theta})^2},
\]
with group-size-weighted means $\bar{\theta}$ and $\bar{\hat{\theta}}$.
$\beta = 1$ means the emulation reproduces the reported differences
between groups, $\beta > 1$ that it exaggerates them, and $\beta < 1$ that
it flattens them. For two groups, $\beta$ is the ratio of the emulated to
the reported gap.

\section{Number of Runs}
\label{app:runs}
 
We test whether more runs per voter reduce misrepresentation by applying
majority voting to the first $K \in \{1, 3, 5\}$ of the five base-level
runs for every voter (Table~\ref{tab:runs}).
 
\begin{table}[h]
\centering
\caption{Accuracy and worst-group error $\varepsilon_{\max}$ by attribute
under majority voting over $K$ runs (base level).}
\label{tab:runs}
\small
\setlength{\tabcolsep}{4pt}
\begin{tabular}{@{}llccccc@{}}
\toprule
& & & & \multicolumn{3}{c}{$\varepsilon_{\max}$} \\
\cmidrule(l){5-7}
Model & Year & $K$ & Acc. & Race & Gender & Educ. \\
\midrule
Claude & 2012 & 1 & .822 & .051 & .085 & .048 \\
       &      & 3 & .822 & .051 & .086 & .048 \\
       &      & 5 & .823 & .051 & .085 & .047 \\
       & 2016 & 1 & .832 & .071 & .094 & .076 \\
       &      & 3 & .832 & .071 & .094 & .076 \\
       &      & 5 & .831 & .071 & .094 & .076 \\
       & 2020 & 1 & .847 & .124 & .107 & .109 \\
       &      & 3 & .847 & .135 & .106 & .108 \\
       &      & 5 & .848 & .135 & .106 & .108 \\
\midrule
Llama  & 2012 & 1 & .786 & .106 & .127 & .122 \\
       &      & 3 & .792 & .098 & .123 & .112 \\
       &      & 5 & .795 & .092 & .120 & .110 \\
       & 2016 & 1 & .818 & .142 & .140 & .171 \\
       &      & 3 & .828 & .135 & .133 & .169 \\
       &      & 5 & .825 & .127 & .137 & .162 \\
       & 2020 & 1 & .851 & .169 & .052 & .060 \\
       &      & 3 & .857 & .159 & .051 & .064 \\
       &      & 5 & .858 & .136 & .049 & .063 \\
\bottomrule
\end{tabular}
\end{table}
 
For Claude, the number of runs makes almost no difference: accuracy and
the worst-group error by gender and education change by at most 0.1
points, because its runs agree for 99.3--99.6\% of voters. The one larger
change, race in 2020 (0.124 to 0.135), comes from the small Native
American group ($n = 89$), where a single voter shifts the group share by
more than one point. For Llama, whose runs disagree more often, accuracy
rises by at most 1.1 points and the worst-group error changes by at most
1.5 points, except for race in 2020 (0.169 to 0.136), which again
concerns the Native American group. The change in the racial worst-group error
between one and five runs is not statistically significant in any
election for either model (paired bootstrap, $p \geq 0.51$;
Section~5.2), and the errors for gender and education,
the largest and most consistent in Section~5.2, stay
essentially unchanged. More runs make the emulation more stable, but they
do not correct a directional error that every run shares.

\section{Prompt Templates}
\label{app:prompts}

This appendix reports the prompt templates used in our experiments. All templates use the same profile, for example: \emph{``Race/ethnicity: White; Gender: woman; Age: 40; Education: a high school diploma or less; Ideology: liberal; Pays attention to politics: always.''} The order of \{Candidate A\} and \{Candidate B\} is randomized per voter and kept fixed across models and levels of control. The model is instructed to answer with one candidate name, and answers are parsed as described in Appendix~\ref{app:models}. Activation-level control uses the base template unchanged (Appendix~\ref{app:caa}).

\begin{table*}[htbp]
\centering
\caption{Prompt templates for the three levels of control. Placeholders in braces are replaced with the voter's profile and the two candidates.}
\label{tab:prompt_templates}
\begin{tabular}{p{0.18\textwidth}p{0.76\textwidth}}
\toprule
Method & Prompt Template \\
\midrule

Unsteered LLM &
\begin{minipage}[t]{\linewidth}
You are simulating a survey respondent in a U.S. election study. Given the following voter profile, predict which candidate this respondent would vote for.

Profile: \{profile\_text\}.

Question: Based on this profile, would this respondent vote for \{Candidate A\} or \{Candidate B\}?

Answer with only one candidate name.
\end{minipage}
\\
\hline

Naive Prompt Steering &
\begin{minipage}[t]{\linewidth}
You are simulating a survey respondent in a U.S. election study. Given the following voter profile, predict which candidate this respondent would vote for.

Profile: \{profile\_text\}.

When making the prediction, consider both individual-level electoral factors such as ideology and political interest, and avoid systematically under-representing the voting preferences of the respondent's racial or ethnic community.

Question: Based on this profile, would this respondent vote for \{Candidate A\} or \{Candidate B\}?

Answer with only one candidate name.
\end{minipage}
\\
\hline

Decomposed Prompt Steering &
\begin{minipage}[t]{\linewidth}
You are simulating a survey respondent in a U.S. election study. Given the following voter profile, predict which candidate this respondent would vote for.

Profile: \{profile\_text\}.

[Shared electoral reasoning block]

Consider task-relevant individual-level signals, including ideology, political interest, age, education, and other
non-sensitive profile attributes.

[Attribute-aware representation block]

Separately consider whether the prediction would systematically under-represent the voting pattern of the respondent's racial or ethnic community. This block should be treated as an independent consideration and should not override the individual-level profile evidence.

[Final prediction]

Based on both blocks, would this respondent vote for \{Candidate A\} or \{Candidate B\}?

Answer with only one candidate name.
\end{minipage}
\\
\bottomrule
\end{tabular}
\end{table*}

\section{Intersectional Results}
\label{app:intersections}
 
Table~\ref{tab:intersections} reports the worst-group error when racial
groups are split by gender, age, or education, at the base level with
$K = 5$. Subgroups with fewer than 30 respondents are excluded, so the
number of subgroups varies across elections. Per-subgroup estimates with
95\% confidence intervals are provided in the supplementary material.
 
\begin{table*}[h]
\centering
\caption{Worst-group error $\varepsilon_{\max}$ for race and its
intersections (base level, $K = 5$). ``Worst subgroup'' gives the subgroup with the largest error, its signed error (positive: share of emulated Democratic choices above the reported share), and its size. ``Sig.'' counts subgroups whose 95\% interval
excludes zero, out of all subgroups with at least 30 respondents.}
\label{tab:intersections}
\small
\setlength{\tabcolsep}{4pt}
\begin{tabular}{@{}lllclrc@{}}
\toprule
Model & Year & Grouping & $\varepsilon_{\max}$ & Worst subgroup & Signed ($n$) & Sig. \\
\midrule
Claude & 2012 & Race               & .051 & White                    & $-.051$ (2437) & 2/5 \\
       &      & Race $\times$ gender    & .139 & White men               & $-.139$ (1205) & 3/9 \\
       &      & Race $\times$ age       & .118 & Other, 35--49           & $-.118$ (34)   & 3/14 \\
       &      & Race $\times$ education & .149 & Hispanic, Bachelor's    & $+.149$ (87)   & 3/14 \\
       & 2016 & Race               & .071 & Asian/NHPI               & $+.071$ (42)   & 0/5 \\
       &      & Race $\times$ gender    & .175 & Hispanic men            & $-.175$ (80)   & 4/8 \\
       &      & Race $\times$ age       & .100 & Hispanic, 18--34        & $-.100$ (40)   & 0/10 \\
       &      & Race $\times$ education & .184 & Hispanic, HS or less    & $-.184$ (38)   & 1/10 \\
       & 2020 & Race               & .135 & Native American          & $+.135$ (89)   & 2/6 \\
       &      & Race $\times$ gender    & .143 & Native American women   & $+.143$ (42)   & 6/12 \\
       &      & Race $\times$ age       & .154 & Other, 50--64           & $+.154$ (39)   & 5/18 \\
       &      & Race $\times$ education & .203 & Hispanic, HS or less    & $-.203$ (74)   & 4/18 \\
\midrule
Llama  & 2012 & Race               & .092 & Hispanic                 & $-.092$ (524)  & 3/5 \\
       &      & Race $\times$ gender    & .154 & Hispanic men            & $-.154$ (266)  & 5/9 \\
       &      & Race $\times$ age       & .152 & Other, 35--49           & $-.152$ (33)   & 8/14 \\
       &      & Race $\times$ education & .216 & Hispanic, HS or less    & $-.216$ (208)  & 7/14 \\
       & 2016 & Race               & .127 & Hispanic                 & $-.127$ (142)  & 3/5 \\
       &      & Race $\times$ gender    & .212 & Hispanic men            & $-.212$ (80)   & 4/8 \\
       &      & Race $\times$ age       & .111 & Black, 35--49           & $-.111$ (45)   & 2/10 \\
       &      & Race $\times$ education & .212 & Black, HS or less       & $-.212$ (33)   & 3/10 \\
       & 2020 & Race               & .136 & Native American          & $+.136$ (88)   & 2/6 \\
       &      & Race $\times$ gender    & .170 & Native American men     & $+.170$ (47)   & 5/12 \\
       &      & Race $\times$ age       & .231 & Other, 50--64           & $+.231$ (39)   & 5/18 \\
       &      & Race $\times$ education & .164 & Hispanic, HS or less    & $-.164$ (73)   & 5/18 \\
\bottomrule
\end{tabular}
\end{table*}
 
Splitting racial groups always raises the worst-group error, from
0.05--0.14 for race alone to 0.10--0.23 for its intersections. Much of
this increase reflects smaller subgroups: several worst subgroups have
fewer than 50 respondents, and for race $\times$ gender, a shuffle test
attributes the increase to sampling noise in some elections
(Section~5.2). One pattern, however, recurs across elections and models and includes some of the larger subgroups (up to 266 voters): for voters who self-identify as Hispanic and male, and for voters who self-identify as Hispanic with at most a high school education, the share of emulated Democratic choices is 15–22 points lower than the reported share. This matches the attributes with the most consistent errors in Section 5.2, where the share of emulated Democratic choices for voters who self-identify as male and for voters without a college degree is below the reported share.

\section{Statistical Tests for the Profile Variants}
\label{app:without_tests}
 
Table~\ref{tab:without_tests} reports, for each attribute left out of the
profile, the change in worst-group error, mean group error, and accuracy
relative to the full profile on ANES 2020, with 95\% confidence intervals
and two-sided $p$-values from a paired bootstrap that resamples voters
1{,}000 times. Claude uses one run per voter and Llama majority voting
over five runs.
 
\begin{table*}[htbp]
\centering
\caption{Change when one attribute is left out of the voter profile (ANES
2020). Differences are ``without'' minus ``full''; negative error
differences mean fairer representation. Bold: $p < 0.05$.}
\label{tab:without_tests}
\small
\begin{tabular}{@{}lllcccrr@{}}
\toprule
Model & Left out & Metric & Full & Without & Difference [95\% CI] & Change & $p$ \\
\midrule
Claude & Race      & Worst-group error & .124 & .185 & $+.061$ [$-.050$, $+.153$] & $+49\%$ & .294 \\
       &           & Mean group error  & .043 & .061 & $\mathbf{+.018}$ [$+.004$, $+.043$] & $\mathbf{+42\%}$ & \textbf{.008} \\
       &           & Accuracy          & .847 & .842 & $-.005$ [$-.011$, $.000$] & & .086 \\
\addlinespace[2pt]
       & Gender    & Worst-group error & .107 & .062 & $\mathbf{-.045}$ [$-.054$, $-.036$] & $\mathbf{-42\%}$ & $\mathbf{<.001}$ \\
       &           & Mean group error  & .071 & .054 & $\mathbf{-.018}$ [$-.031$, $-.004$] & $\mathbf{-25\%}$ & \textbf{.020} \\
       &           & Accuracy          & .847 & .844 & $-.004$ [$-.011$, $+.003$] & & .328 \\
\addlinespace[2pt]
       & Education & Worst-group error & .109 & .039 & $\mathbf{-.070}$ [$-.100$, $-.027$] & $\mathbf{-64\%}$ & \textbf{.002} \\
       &           & Mean group error  & .048 & .021 & $\mathbf{-.027}$ [$-.036$, $-.012$] & $\mathbf{-57\%}$ & \textbf{.002} \\
       &           & Accuracy          & .847 & .849 & $+.002$ [$-.003$, $+.008$] & & .502 \\
\midrule
Llama  & Race      & Worst-group error & .136 & .124 & $-.013$ [$-.054$, $+.058$] & $-9\%$ & .980 \\
       &           & Mean group error  & .049 & .069 & $\mathbf{+.020}$ [$.000$, $+.031$] & $\mathbf{+41\%}$ & \textbf{.040} \\
       &           & Accuracy          & .858 & .858 & $.000$ [$-.004$, $+.006$] & & .818 \\
\addlinespace[2pt]
       & Gender    & Worst-group error & .049 & .038 & $\mathbf{-.011}$ [$-.016$, $-.005$] & $\mathbf{-22\%}$ & $\mathbf{<.001}$ \\
       &           & Mean group error  & .038 & .023 & $-.015$ [$-.026$, $+.001$] & $-39\%$ & .070 \\
       &           & Accuracy          & .858 & .862 & $+.004$ [$.000$, $+.009$] & & .070 \\
\addlinespace[2pt]
       & Education & Worst-group error & .063 & .072 & $+.009$ [$-.050$, $+.067$] & $+15\%$ & .766 \\
       &           & Mean group error  & .035 & .034 & $-.001$ [$-.017$, $+.016$] & $-4\%$ & .998 \\
       &           & Accuracy          & .858 & .869 & $\mathbf{+.011}$ [$+.006$, $+.017$] & & $\mathbf{<.001}$ \\
\bottomrule
\end{tabular}
\end{table*}
 
Leaving out gender reduces the worst-group error significantly for both
models, and leaving out education does so for Claude. Leaving out race
significantly increases the mean group error for both models. Accuracy
never decreases significantly; for Llama, it increases slightly when
education is left out.

\section{Discussion and Limitations}
\label{sec:limitations}
 
\paragraph{What the audit supports}
Misrepresentation is measurable, systematic, and hidden by aggregate
metrics. Across three elections and two models, emulated electorates
reach individual accuracy of 0.80--0.86 and small population-level errors,
yet misjudge particular groups by 10 points or more. The most consistent
errors exaggerate real differences between groups, by gender and
education in particular, and repeated runs or aggregation do not remove
them. Which racial groups are misrepresented, however, varies across
elections, so an audit of one model and election offers limited
assurance for another. Studies that use emulated electorates should
therefore report group-level errors alongside accuracy, per group and
with group sizes, rather than a single summary number.
 
\paragraph{What the mitigation results do not support}
The evidence does not support a claim that the controls we tested reduce
misrepresentation. Race-targeted prompts leave the gender and education
errors in place and, for Claude, exaggerate racial differences further;
where control helps, it removes an overall partisan shift rather than
correcting particular groups; and activation-level control worsens
misrepresentation. Our controls target race, whereas the largest errors concern
gender and education, and the source analysis shows that removing a
demographic label trades one error for another. Reducing
misrepresentation therefore requires controlling how much weight each
label receives, not only whether it is present.
 
\paragraph{Statistical power}
All analyses use every eligible respondent, and group errors are
reported with 95\% confidence intervals. Small groups nevertheless limit
precision: the largest racial error, for voters who self-identify as Native American in 2020, rests on 89 voters, and several intersectional subgroups have fewer
than 50. We therefore report intersectional results in the appendix and
do not claim that they reveal additional misrepresentation beyond what
smaller samples produce.
 
\paragraph{Steering-vector construction}
The activation-level direction contrasts voters who self-identify as White with voters who self-identify with another racial or ethnic group, and is added to every voter regardless of their group. It therefore encodes
demographic identity rather than the accuracy of a voter's
representation, and in practice it shifts the whole electorate toward
the Democratic candidate: even at the strength we use, the Democratic
share rises by 7 points, and larger strengths collapse almost all
predictions to one party (Appendix~\ref{app:caa}). Per-group vectors
applied only to matching voters, or contrasts between correctly and
incorrectly emulated voters within a group, are natural alternatives that
we leave to future work.
 
\paragraph{Prompt confounds}
The prompt-based and decomposed levels differ from the base level in
prompt length, structure, and instruction content at the same time. Any
effect of the decomposed level can therefore not be attributed to its
block structure alone; a fully controlled comparison would require a
single-block prompt of the same length and content. We treat the
decomposed level as one way to implement prompt-level control and scope
our claims accordingly. Activation-level control does not share this
confound, as it uses the base prompt unchanged.
 
\paragraph{Contamination}
Aggregate vote choices by group for the 2012, 2016, and 2020 elections are
widely published and may appear in the models' training data. A model whose share of emulated Democratic choices for a group is close to that group's published aggregate vote share may be recalling the published figure rather than conditioning on the profile it was given. The source analysis (Section~5.3) shows
that both models respond to the full profile and not only to the
demographic label, but it cannot separate conditioning on the profile
from recall of published group statistics.
 
\paragraph{Scope}
We evaluate two models, one per access regime, on one survey series. The
samples include respondents who reported a two-party presidential vote,
and group shares are unweighted, so they describe surveyed voters rather
than the national electorate. Profiles contain six demographic and
political fields and exclude party identification, which would make the
task considerably easier. Whether the findings extend to other countries,
multi-party elections, other survey instruments, or richer profiles
remains to be tested.
\end{document}